\newenvironment{proof}{\begin{trivlist} \item[]
{\bf Proof.}}{\nolinebreak
\hfill \rule{2mm}{2mm} \end{trivlist}}
\newcounter{ctr}
\newcounter{ctr1}
\newcounter{ctr2}
\newcounter{ctr3}
\newtheorem{definition}{Definition}[section]    
\newtheorem{theorem}[definition]{Theorem}
\newenvironment{theorem*}[1]{{\bf Theorem #1} \begin{itshape}}{\end{itshape}}
\newenvironment{corollary*}[1]{{\bf Corollary #1} \begin{itshape}}{\end{itshape}}
\newenvironment{proposition*}[1]{{\bf Proposition #1} \begin{itshape}}{\end{itshape}}
\newcommand{\RR}{\mathbb{R}}
\newcommand{\ud}{\, {\rm d} \kern-.015em }
\newcommand{\modulus}[1]{\left| \kern.05em #1 \kern.05em \right|}
\newcommand{\norm}[1]{\left\| \kern.05em #1 \kern.05em \right\|}
\newcommand{\inner}[1]{\left\langle \kern.05em #1 \kern.05em \right\rangle }
\newcommand{\pick}[2]{\renewcommand{\arraystretch}{0.6}
\left( \kern-.4em \begin{array}{c} #1 \\ #2 \end{array} \kern-.4em \right) }
\title{On Optimal Multiple Changepoint Algorithms for Large Data}
\author[1,$\dag$]{Robert Maidstone}
\author[2]{Toby Hocking}
\author[3]{Guillem Rigaill}
\author[4]{Paul Fearnhead} 
\affil[1]{STOR-i Centre for Doctoral Training, Lancaster University}
\affil[2]{McGill University and Genome Quebec Innovation Center}
\affil[3]{Unit\'{e} de Recherche en G\'{e}nomique V\'{e}g\'{e}tale (URGV), INRA-CNRS-Universit\'{e}
d'Evry}
\affil[4]{Department of Mathematics and Statistics, Lancaster University}
\affil[$\dag$]{Correspondence: r.maidstone@lancaster.ac.uk}
\begin{document}
\maketitle 
\begin{center}
 {\bf Abstract}
\end{center}

There is an increasing need for algorithms that can accurately detect changepoints in long time-series, or equivalent, data.
Many common approaches to detecting changepoints, for example based on penalised likelihood or minimum description length,
can be formulated in terms of minimising a cost over segmentations. Dynamic programming methods exist to solve this minimisation
problem exactly, but these tend to scale at least quadratically in the length of the time-series. Algorithms, such as Binary Segmentation,
exist that have a computational cost that is close to linear in the length of the time-series, but these are not guaranteed to find the 
optimal segmentation. Recently pruning
ideas have been suggested that can speed up the dynamic programming algorithms, whilst still being guaranteed to find true minimum of the cost
function. Here we extend these pruning methods, and introduce two new algorithms for segmenting data, FPOP and SNIP. Empirical results
show that FPOP is substantially faster than existing dynamic programming methods, and unlike the existing methods its computational
efficiency is robust to the number of changepoints in the data. We evaluate the method at detecting Copy Number Variations and observe
that FPOP has a computational cost that is competitive with that of Binary Segmentation.

{\bf Keywords:} Breakpoints, Dynamic Programming, FPOP, SNIP, Optimal Partitioning, pDPA, PELT, Segment Neighbourhood.

 \section{Introduction}
\label{sec:introduction}

Often time series data experiences multiple abrupt changes in structure which need to be taken into account if the data is to be modelled effectively. These changes known as changepoints (or equivalently breakpoints) cause the data to be split into segments which can then be modelled seperately. Detecting changepoints, both accurately and efficiently, is required in a number of applications including financial data \citep{Fryzlewicz2012}, climate data \citep{Killick2012a, Reeves2007}, EEG data \citep{Lavielle2005} and the analysis of speech signals \citep{Davis2006}.

In Section \ref{sec:emp-eval-fpop} of this paper we look at detecting changes in DNA copy number in tumour microarray data. 
Regions in which this copy number is amplified or reduced from a baseline level can relate to tumorous cells and detection of these areas is crucial for classifying tumour progression and type. 
Changepoint methods are widely used in this area \citep{Zhang2007, Olshen2004, Picard2011} and moreover tumour microarray data has been used as a basis for benchmarking changepoint techniques both in terms of accuracy \citep{Hocking2013} and speed \citep{Hocking2014}.

Many approaches to estimating the number and position of changepoints \cite[e.g.][]{Braun/Braun/Muller:2000,Davis2006,Zhang2007} can be formulated in terms of defining a cost function for a segmentation. They then either minimise a penalised
version of this cost, which we call the penalised minimisation problem; or minimise the cost under a constraint on the number of changepoints, which we call the
constrained minimisation problem. If the cost function depends on the data through a sum of segment-specific costs then the minimisation can be done exactly using dynamic programming \cite[]{Auger1989,Jackson2005}. However
these dynamic programming methods have a cost that increases at least quadratically with the amount of data. This large computational cost is an increasingly important issue as every larger data sets are 
needed to be analysed.

Alternatively, much faster algorithms exist that provide approximate solutions to the minimisation problem.
The most widely used of these approximate techniques is Binary Segmentation \citep{Scott1974}. 
This takes a recursive approach, adding changepoints one at a time. With a new changepoint added in the position that would lead to the largest reduction in cost given the location of previous changepoints.
Due to its simplicity, Binary Segmentation is computationally efficient, being roughly linear in the amount of data, however it only provides an approximate solution and can lead to poor estimation
of the number and position of changepoints \cite[]{Killick2012a}.
Variations of Binary Segmentation, such as Circular Binary Segmentation \citep{Olshen2004} and Wild Binary Segmentation \citep{Fryzlewicz2012}, can offer more accurate solutions for slight decreases in the computational efficiency.

A better solution, if possible, is to look at ways of speeding up the dynamic programming algorithms. Recent work has shown this is possible via the pruning of the solution space. 
\citet{Killick2012a} present a technique for doing this which we shall refer to as {\em inequality based pruning}. This forms the basis of their method PELT which can be used to solve the penalised minimisation problem.  
\citet{Rigaill2010} develop a different pruning technique, {\em functional pruning}, and this is used in their pDPA method which can be used to solve the constained minisation problem.
Both PELT and pDPA are optimal algorithms, in the sense that they
find the true optimum of the minimisation problem they are trying to solve. 
However the pruning approaches they take are very different, and work well in different scenarios. PELT is most efficient
in applications where the number of changepoints is large, and pDPA when there are few changepoints. 

The focus of this paper is on these pruning techniques, with the aim of trying to combine ideas from PELT and pDPA. This leads to two new algorithms, FPOP and SNIP. The former uses functional pruning to solve the 
penalised minisation problem, and the latter uses inequality based pruning to solve the constrainted minisation problem. We further show that FPOP always prunes more than PELT. Empirical results suggest that FPOP is efficient for
large data sets regardless of the number of changepoints, and we observe that FPOP has a computational cost that is even competitive with Binary Segmentation.

The structure of the paper is as follows. We introduce the constrained and penalised optimisation problems for segmenting data in the next section. We then review the existing dynamic progamming methods and pruning approaches 
for solving the penalised optimisation problem in Section \ref{sec:pen} and for solving the constrained optimisation problem in Section \ref{sec:const}. The new algorithms, FPOP and SNIP, are developed in
Section \ref{sec:incr-effic}, and compared empirically and theoretically with existing pruning methods in Section \ref{sec:simil-diff-betw}. We then evaluate FPOP empirically on both simulated and CNV data in Section 
\ref{sec:emp-eval-fpop}. The paper ends with a discussion.



\section{Model Definition}
\label{sec:defining-change-mean}
Assume we have data ordered by time, though the same ideas extend trivially to data ordered by any other attribute such as position along a chromosome.
Denote the data by $\mathbf{y}=(y_1,\hdots,y_n)$. We will use that notation that, for $s\geq t$, the set of observations from time $t$ to time $s$ is $\mathbf{y}_{t:s}=(y_{t},...,y_s)$. 
If we assume that there are $k$ changepoints in the data, this will correspond to the data being split into $k+1$ distinct segments. 
We let the location of the $j$th changepoint be $\tau_j$ for $j=1,\hdots,k$, and set $\tau_0=0$ and $\tau_{k+1}=n$. The $j$th segment will consist of data points $y_{\tau_{j-1}+1},\ldots,y_{\tau_j}$. We let 
$\mathbf{\tau}=(\tau_0,\ldots,\tau_{k+1})$ be the set of changepoints.

The statistical problem we are considering is how to infer both the number of changepoints and their locations. The specific details of any approach will depend on the type of change, such as change in mean, variance or distribution, that we wish to detect. However a general framework that encompasses many changepoint detection methods is to introduce a cost function for each segment. The cost of a segmentation can then be defined in terms of the sum of the costs across the segments, and we can infer segmentations through minimising the segmentation cost.

Throughout we will let $\mathcal{C}(\mathbf{y}_{t+1:s})$, for $s\leq t$, denote the cost for a segment consisting of data points $y_{t+1},\ldots,y_s$. The cost of a segmentation, $\tau_1,\ldots,\tau_k$ is then 
\begin{align}
 \sum^k_{j=0}\mathcal{C}(\mathbf{y}_{\tau_j+1:\tau_{j+1}}) .\label{eq:1a}
\end{align}

The form of this cost function will depend on
the type of change we are wanting to detect. One generic approach to defining these segments is to introduce a model for the data within a segment, and then to let the cost be minus the maximum log-likelihood for the data in that segment. If our model is that the data is independent and identically distributed with segment-specific parameter $\mu$ then 
\begin{align} \label{eq:nll}
  \mathcal{C}(\mathbf{y}_{t+1:s})=\min_{\mu}\sum_{i=t+1}^{s}-\log (p(y_i|\mu)).
\end{align}
In this formulation we are detecting changes in the value of the parameter, $\mu$, across segments.

For example if $\mu$ is the mean in Normally distributed data, with known variance $\sigma^2$, then the cost for a segment would simply be
\begin{align} \label{eq:Cost}
  \mathcal{C}(\mathbf{y}_{t+1:s})=
  \frac{1}{2\sigma^2}\left[\min_{\mu}\sum_{i=t+1}^s\left(y_i-\frac{1}{s-t}\sum_{j=t+1}^s y_j\right)^2\right],
\end{align}
which is just a quadratic error loss. We have removed a term that does not depend on the data and is linear in segment length, as this term does not affect the optimal segmentation.
Note we will get the same optimal segmentation for any choice of $\sigma>0$ using this cost function.

\subsection{Finding the Optimal Segmentation}
\label{sec:find-optim-segm}

If we knew the number of changepoints in the data, $k$, then we could infer their location through minimising (\ref{eq:1a}) over all segmentations with $k$ changepoints.  Normally however $k$ is unknown, and also has to be estimated. A common approach is to define
\begin{align}
 C_{k,n}=\min_{\boldsymbol{\tau}}\left[\sum^k_{j=0}\mathcal{C}(\mathbf{y}_{\tau_j+1:\tau_{j+1}})\right],\label{eq:1}
\end{align}
the minimum cost of a segmenting data $y_{1:n}$ with $k$ changepoints. As $k$ increases we have more flexibility in our model for the data, so often $C_{k,n}$ will be monotonically decreasing in $k$ and estimating the number of changepoints by minimising $C_{k,n}$ is not possible. One solution is to solve (\ref{eq:1}) for a fixed value of $k$  which is either assumed to be known or chosen separately. We call this problem the {\em constrained case}.

If $k$ is not known, then a common approach is to calculate $C_{k,n}$ and the corresponding optimal segmentations for a range of values, $k=0,1,\ldots,K$, where $K$ is some chosen maximum number. We can then
estimate the number of changepoints by minimising $C_{k,n}+f(k,n)$ over $k$ for some suitable penalty function $f(k,n)$. 
The most common choices of $f(k,n)$, for example SIC \citep{Schwarz1978} and AIC \citep{Akaike1974} are linear in $k$

If the penalty function is linear in $k$, with $f(k,n)=\beta k$ for some $\beta>0$ (which may depend on $n$), then we can directly find the optimal number of changepoints and segmentation by noting that
\begin{eqnarray}
\min_k \left[C_{k,n}+\beta k\right] &=& \min_{k,\boldsymbol{\tau}}\left[\sum^k_{j=0}\mathcal{C}(\mathbf{y}_{\tau_j+1:\tau_{j+1}})\right]+\beta k \nonumber \\
 &=&  \min_{k,\boldsymbol{\tau}}\left[\sum^k_{j=0}\mathcal{C}(\mathbf{y}_{\tau_j+1:\tau_{j+1}})+\beta \right] -\beta.\label{eq:2}
\end{eqnarray}
We call the minimisation problem in (\ref{eq:2}) the {\em penalised case}.

In both the constrained and penalised cases we need to solve a minimisation problem to find the optimal segmentation under our criteria. There are efficient dynamic programming algorithms for solving each of these minimisation problems. 
For the constrained case this is achieved using the Segment Neighbourhood Search algorithm (see Section~\ref{sec:seg-neigh-search}), whilst for the penalised case 
this can be achieved using the Optimal Partitioning algorithm (see Section~\ref{sec:optim-part}).

Solving the constrained case offers a way to get optimal segmentations for $k=0,1,\hdots,K$ changepoints, and thus gives insight into how the segmentation varies with the number of segments.
However, a big advantage of the penalised case is that it incorporates model selection into the problem itself, and therefore is often computationally more efficient when dealing with an unknown value of $k$. 

\subsection{Conditions for Pruning}

The focus of this paper is on methods for speeding up these dynamic programming algorithms using pruning methods. The pruning methods can be applied under one of two conditions on the segment costs:

\begin{description}
\item[C1] The cost function satisfies
\begin{align*}
  \mathcal{C}(\mathbf{y}_{t+1:s})=\min_{\mu}\sum_{i=t+1}^{s}\gamma(y_i,\mu),
\end{align*}
for some function $\gamma(\cdot,\cdot)$, with parameter $\mu$.
\item[C2] There exists a constant $\kappa$ such that for all $t<s<T$,
  \begin{align*}
      \mathcal{C}(\mathbf{y}_{t+1:s})+\mathcal{C}(\mathbf{y}_{s+1:T})+\kappa \leq\mathcal{C}(\mathbf{y}_{t+1:T}).
  \end{align*}
\end{description}

Condition C1 will be used by functional pruning (which is discussed in Sections \ref{sec:segm-neighb-search} and \ref{sec:opr}).
Condition C2  will be used by the inequality based pruning (Section~\ref{sec:optim-part-pelt} and~\ref{sec:snp}). Note that C1 is a stronger
condition than C2. If C1 holds then C2 also holds with $\kappa=0$.

For many practical cost functions these conditions hold; for example it is easily seen that for the negative log-likelihood (\ref{eq:nll}) C1 holds with $\gamma(y_t,\mu)=-\log(p(y_t|\mu))$ and C2 holds with $\kappa=0$.

\section{Solving the Penalised Optimisation Problem}\label{sec:pen}

We first consider solving the penalised optimisation problem \eqref{eq:2} using a dynamic programming approach. The initial algorithm, Optimal Partitioning \citep{Jackson2005}, will be discussed first before mentioning how pruning can be used to reduce the computational cost.

\subsection{Optimal Partitioning}
\label{sec:optim-part}
Consider segmenting the data $\mathbf{y}_{1:t}$. Denote $F(t)$ to be the minimum value of the penalised cost (\ref{eq:2}) for segmenting such data, with $F(0)=-\beta$. 
The idea of Optimal Partitioning is to split the minimisation over segmentations into the minimisation over the position of the last changepoint, and then the minimisation over the earlier changepoints. 
We can then use the fact that the minimisation over the earlier changepoints will give us the value $F(\tau^*)$ for some $\tau^*<t$
\begin{align*}
  F(t)&=\min_{\boldsymbol{\tau},k}\sum_{j=0}^k\left[\mathcal{C}(\mathbf{y}_{\tau_j+1:\tau_{j+1}})+\beta\right]-\beta,\\
&=\min_{\boldsymbol{\tau},k}\left\{\sum_{j=0}^{k-1}\left[\mathcal{C}(\mathbf{y}_{\tau_j+1:\tau_{j+1}})+\beta\right]+\mathcal{C}(\mathbf{y}_{\tau_k+1:t})+\beta\right\}-\beta,\\
&=\min_{\tau^*}\left\{\min_{\boldsymbol{\tau},k'}\sum_{j=0}^{k'}\left[\mathcal{C}(\mathbf{y}_{\tau_j+1:\tau_{j+1}})+\beta\right]-\beta+\mathcal{C}(\mathbf{y}_{\tau^*+1:t})+\beta\right\},\\
&=\min_{\tau^*}\left\{F(\tau^*)+\mathcal{C}(\mathbf{y}_{\tau^*+1:t})+\beta\right\}.
\end{align*}

Hence we obtain a simple recursion for the $F(t)$ values
\begin{align}
 F(t)=\min_{0\leq\tau<t}\left[F(\tau) + \mathcal{C}(\mathbf{y}_{\tau+1:t}) + \beta\right]. \label{eq:4}
\end{align}

The segmentations themselves can recovered by first taking the arguments which minimise (\ref{eq:4})
\begin{align}
  \label{eq:6}
  \tau^*_t=\operatorname*{arg\,min}_{0\leq\tau<t}\left[F(\tau) + \mathcal{C}(\mathbf{y}_{\tau+1:t}) + \beta\right],
\end{align}
which give the optimal location of the last changepoint in the segmentation of $y_{1:t}$. 

 If we denote the vector of ordered changepoints in the optimal segmentation of $y_{1:t}$ by $cp(t)$, with $cp(0)=\emptyset$, then the optimal changepoints up to a time $t$ can be calculated recursively
\begin{align*}
  cp(t)=(cp(\tau^*_t),\tau^*_t).
\end{align*}
As equation~\eqref{eq:4} is calculated for time steps $t=1,2,\hdots,n$ and each time step involves a minimisation over $\tau=0,1,\hdots,t-1$ 
the computation actually takes $\mathcal{O}(n^2)$ time.

\subsection{PELT} \label{sec:optim-part-pelt}

One way to increase the efficiency of Optimal Partitioning is discussed in \citet{Killick2012a} where they introduce the PELT (Pruned Exact Linear Time) algorithm. PELT works by limiting the set of potential 
previous changepoints (i.e. the set over which $\tau$ is chosen from in the minimisation in equation~\ref{eq:4}). They show that if Condition C2 holds for some $\kappa$, and if
\begin{align} 
  F(t)+\mathcal{C}(\mathbf{y}_{(t+1:s)})+\kappa > F(s),\label{eq:hold}
\end{align}
then at any future time $T>s$, $t$ can never be the optimal location of the most recent changepoint prior to $T$.



This means that at every time step $s$ the left hand side of equation~(\ref{eq:hold}) can be calculated for all potential values of the last changepoint. 
If the inequality holds for any individual $t$ then that $t$ can be discounted as a potential last changepoint for all future times.
Thus the update rules (\ref{eq:4}) and (\ref{eq:6}) can be restricted to a reduced set of potential last changepoints, $\tau$, to consider. 
This set, which we shall denote as $R_t$, can be updated simply by
\begin{align}
  R_{t+1}=\{\tau\in \{R_{t}\cup\{t\}\}:F(\tau)+\mathcal{C}(\mathbf{y}_{(\tau+1):t})+\kappa \leq F(t)\}.
\end{align}
This pruning technique, which we shall refer to as {\em inequality based pruning}, forms the basis of the PELT method.


As at each time step in the PELT algorithm the minimisation is being run over fewer values it would be expected that this method would be more efficient than the basic Optimal Partitioning algorithm. 
In \citet{Killick2012a} it is shown to be at least as efficient as Optimal Partitioning, with PELT's computational cost being bounded above by $\mathcal{O}(n^2)$. Under certain conditions the 
expected computational cost can be shown to be bounded by $Ln$ for some constant $L<\infty$. These conditions are given fully in \citet{Killick2012a}, the most important of which is that the expected number of changepoints in the data increases linearly with the length of the data, $n$.

\section{Solving the Constrained Optimisation Problem} \label{sec:const}

We now consider solving the constrained optimisation problem (\ref{eq:1}) using dynamic programming. These methods assume a maximum number of changepoints that are to be considered, $K$, and then solve the constrained optimisation problem for all values of $k=1,2,\ldots,K$. We first describe the initial algorithm,  Segment Neighbourhood Search \citep{Auger1989}, and then an approach that uses pruning. 

\subsection{Segment Neighbourhood Search}
\label{sec:seg-neigh-search}
Take the constrained case \eqref{eq:1} which segments the data up to $t$, for $t\geq k+1$, into $k+1$ segments (using $k$ changepoints), and denote the minimum value of the cost by $C_{k,t}$. 
The idea of Segment Neighbourhood Search is to derive a relationship between $C_{t,k}$ and $C_{s,k-1}$ for $s<t$:
\begin{align*}
  C_{k,t}&=\min_{\boldsymbol{\tau}}\sum_{j=0}^k\mathcal{C}(\mathbf{y}_{\tau_j+1:\tau_{j+1}}),\\
&= \min_{{\tau_k}}\left[\min_{\boldsymbol{\tau}_{1:k-1}}\sum_{j=0}^{k-1}\mathcal{C}(\mathbf{y}_{\tau_j+1:\tau_{j+1}})+\mathcal{C}(\mathbf{y}_{\tau_k+1:\tau_{k+1}})\right],\\
&=\min_{{\tau_k}}\left[C_{k-1,\tau_k}+\mathcal{C}(\mathbf{y}_{\tau_k+1:\tau_{k+1}})\right].
\end{align*}
Thus the following recursion is obtained:
\begin{align}
  C_{k,t}=\min_{\tau\in\{k,\hdots,t-1\}}\left[C_{k-1,\tau}+\mathcal{C}(\mathbf{y}_{\tau+1:t})\right].\label{eq:SNS}
\end{align}
If this is run for all values of $t$ up to $n$ and for $k=2,\hdots,K$, then the optimal segmentations with $1,\hdots,K$ segments can be acquired.

To extract the optimal segmentation we first let $\tau^*_l(t)$ denote the optimal position of the last changepoint if we segment data $\mathbf{y}_{1:t}$ using $l$ changepoints. This can be calculated as
\[
 \tau^*_l(t)=\operatorname*{arg\,min}_{\tau\in\{l,\hdots,t-1\}}\left[C_{l-1,\tau}+\mathcal{C}(\mathbf{y}_{\tau+1:t})\right].
\]
Then if we let $(\tau_1^k,\ldots,\tau_k^k)$ be the set of changepoints in the optimal segmentation of $\mathbf{y}_{1:n}$ into $k+1$ segments, we have $\tau_k^k=\tau^*_k(n)$. Furthermore we can calculate the other
changepoint positions recursively for $l=k-1,\ldots,1$ using
\[
 \tau_l^k(n)=\tau^*_l(\tau_{l+1}^k).
\]

For a fixed value of $k$ equation~(\ref{eq:SNS}) is computed for $t\in 1,\hdots,n$. Then for each $t$ the minimisation is done for $\tau=1,\hdots,t-1$. 
This means that $\mathcal{O}(n^2)$ calculations are needed. 
However, to also identify the optimal number of changepoints this then needs to be done for $k\in 1,\hdots,K$ so the total computational cost in time can be seen to be $\mathcal{O}(Kn^2)$.

\subsection{Pruned Segment Neighbourhood Search} \label{sec:segm-neighb-search}

\citet{Rigaill2010} has developed techniques to increase the efficiency of Segment Neighbourhood Search using functional pruning. 
These form the basis of a method called pruned Dynamic Programming Algorithm (pDPA). A more generic implementation of this method is presented in \citet{Cleynen2012}.
Here we describe how this algorithm can be used to calculate the $C_{k,t}$ values. Once these are calculated, the optimal segmentation can be extracted as in Segment Neighbourhood Search.

Assuming condition C1, the segment cost function can be split into the component parts $\gamma(y_i,\mu)$, which depend on the parameter $\mu$. 
We can then define new cost functions, $Cost^{\tau}_{k,t}(\mu)$, as the minimal cost of segmenting data $y_{1:t}$ into $k$ segments, with a most
recent changepoint at $\tau$, and where the segment after $\tau$ is conditioned to have parameter $\mu$. Thus for $\tau\leq t-1$,
\begin{align}
  Cost^\tau_{k,t}(\mu)=C_{k-1,\tau}+\sum_{i=\tau+1}^t\gamma(y_i,\mu),\label{eq:costfunk}
\end{align}
and $Cost^t_{k,t}(\mu)=C_{k-1,t}$.

These functions, which are stored for each candidate changepoint, can then be updated at each new time step as for $\tau\leq t-1$
\begin{align}
  Cost^\tau_{k,t}(\mu)=Cost^\tau_{k,t-1}(\mu)+\gamma(y_{t},\mu).\label{eq:update}
\end{align}

By taking the minimum of $Cost^\tau_{k,t}(\mu)$ over $\mu$, the individual terms of the right hand side of equation~\eqref{eq:SNS} can be recovered. 
Therefore, by further minimising over $\tau$, the minimum cost $C_{k,t}$ can be returned
\begin{align*}
  \min_\tau \min_\mu Cost^\tau_{k,t}(\mu)&= \min_\tau \min_\mu \left[C_{k-1,\tau}+\sum_{i=\tau+1}^t\gamma(y_i,\mu)\right],\\
&=\min_\tau \left[C_{k-1,\tau}+\min_\mu\sum_{i=\tau+1}^t\gamma(y_i,\mu)\right],\\
&=\min_\tau \left[C_{k-1,\tau}+\mathcal{C}(\mathbf{y}_{\tau+1:t})\right],\\
&= C_{k,t}.
\end{align*}

By interchanging the order of minimisation the values of the potential last changepoint, $\tau$, can be pruned whilst allowing for changes in $\mu$. First we define the function $Cost^*_{k,t}(\mu)$ as follows
\begin{align*}
  Cost^*_{k,t}(\mu)= \min_{\tau}Cost^\tau_{k,t}(\mu).
\end{align*}
We can now get a recursion for $Cost^*_{k,t}(\mu)$ by splitting the minimisation over the most recent changepoint $\tau$ into the two cases $\tau\leq t-1$ and $\tau=t$:
\begin{eqnarray*}
  Cost^*_{k,t}(\mu)&=& \min\left\{ \min_{\tau\leq t-1} Cost^\tau_{k,t}(\mu)\mbox{ },\mbox{ }Cost^t_{k,t}(\mu) \right\}\\
&=&\min\left\{ \min_{\tau\leq t-1} Cost^\tau_{k,t-1}(\mu)+\gamma(y_t,\mu)\mbox{ },\mbox{ }C_{k-1,t} \right\},
\end{eqnarray*}
which gives
\[
Cost^*_{k,t}(\mu)=\min\left\{  Cost^*_{k,t-1}(\mu)+\gamma(y_t,\mu)\mbox{ },\mbox{ }C_{k-1,t} \right\}.
\]

The idea of pDPA is the use this recursion for $Cost^*_{k,t}(\mu)$. We can then use the fact that $C_{k,t}=\min_\mu Cost^*_{k,t}(\mu)$ to calculate the $C_{k,t}$ values. 
 In order to do this we need to be able to represent this function of $\mu$ in an efficient way.
 This can be done if $\mu$ is a scalar, because for any value of $\mu$, $Cost^*_{k,t}(\mu)$  is equal to the value of $Cost^\tau_{k,t}(\mu)$ for some value of $\tau$. 
 Thus we can partition the possible values of $\mu$ into intervals, with each interval corresponding to a value for $\tau$ for which $Cost^*_{k,t}(\mu)=Cost^\tau_{k,t}(\mu)$. 
 
 To make the idea concrete, an example of  $Cost^*_{k,t}(\mu)$ is given in  Figure~\ref{fig:pDPAfuncallin} for a change in mean using a least square cost criteria. 
 Each $Cost^\tau_{k,t}(\mu)$ is a quadratic function in this example. 
In this example there are 6 intervals of $\mu$ corresponding to 5 different values of $\tau$ for which $Cost^*_{k,t}(\mu)=Cost^\tau_{k,t}(\mu)$.  
The pDPA algorithm needs to just store the 5 different $Cost^\tau_{k,t}(\mu)$ functions, and the corresponding sets.

 \begin{figure}[t]
  \centering
  \includegraphics[width=9cm, trim=1cm 1cm 1cm 1cm]{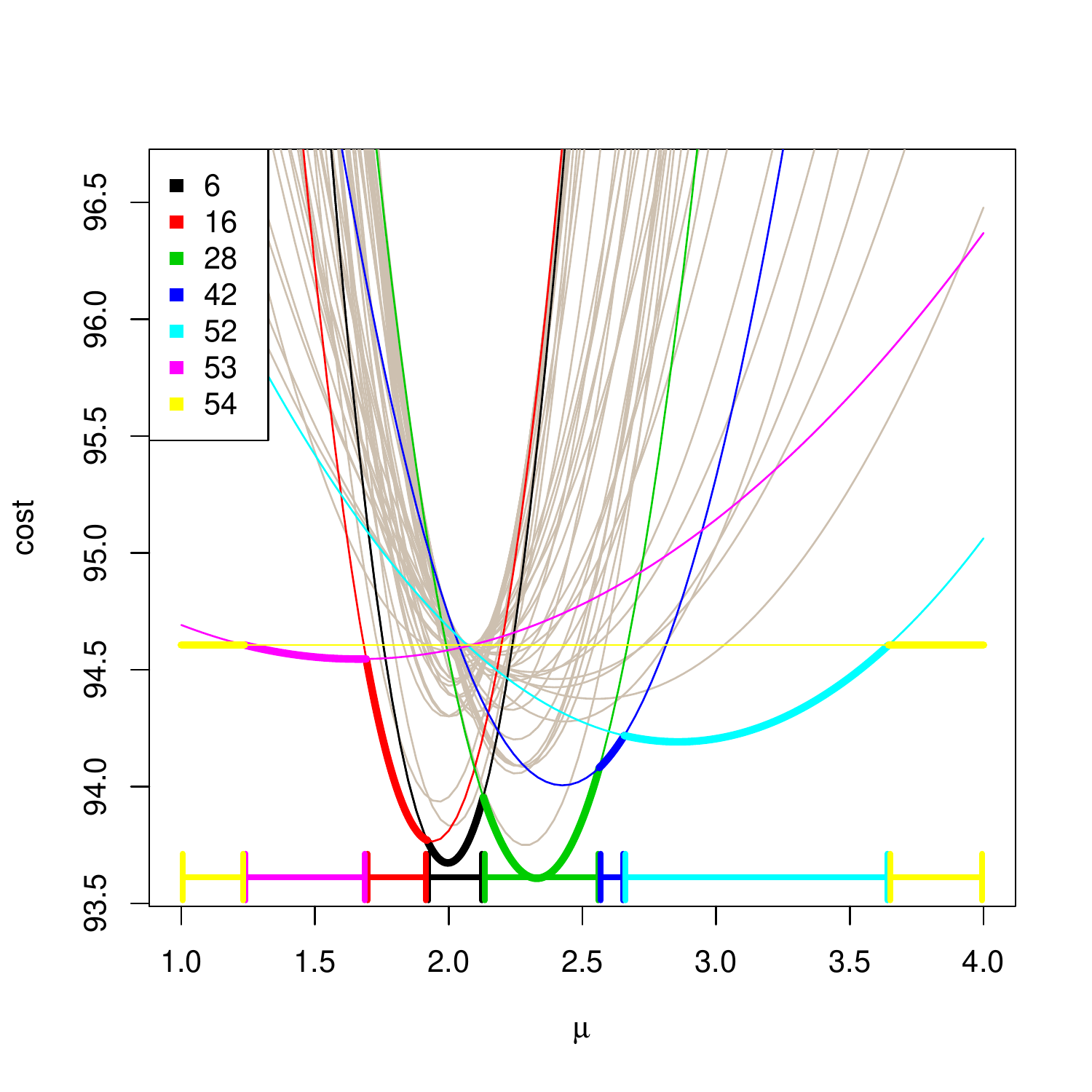}
  \caption{Cost functions, $Cost_{k,\tau}(\mu,t)$ for $\tau=0,\hdots,46$ and $t=46$ and the corresponding $C^*_{k}(\mu,t)$ (in bold) for a change in mean using a least square cost criteria.
  Coloured lines correspond to $Cost_{k,\tau}(\mu,t)$ that contribute to $C^*_{k}(\mu,t)$, with the coloured horizontal lines showing the intervals of $\mu$ for which each value of $\tau$ is such that
  $Cost_{k,\tau}(\mu,t)=C^*_{k}(\mu,t)$.  Greyed out lines correspond to candidates which have previously been pruned, and do not contribute to $C^*_{k}(\mu,t)$. 
 \label{fig:pDPAfuncallin}}
\end{figure}

Formally speaking we define the set of intervals for which $Cost^*_{k,t}(\mu)=Cost^\tau_{k,t}(\mu)$ as $Set_{k,t}^\tau$. The recursion for $Cost^*_{k,t}(\mu)$ can be used to induce a recursion
for these sets. First define:
\begin{align}
 I^\tau_{k,t}=\{\mu: Cost^\tau_{k,t}(\mu)\leq C_{k-1,t}\}. \label{eq:pDPAprune}
\end{align}
Then, for $\tau\leq t-1$ we have
\begin{eqnarray*}
 Set_{k,t}^{\tau}&=&\left\{\mu:Cost^\tau_{k,t}(\mu)= Cost^*_{k,t}(\mu)\right\}\\
&=&\left\{\mu:Cost^\tau_{k,t-1}(\mu)+\gamma(y_t,\mu)=\min\left\{  Cost^*_{k,t-1}(\mu)+\gamma(y_t,\mu),C_{k-1,t} \right\} \right\}.
\end{eqnarray*} 
Remembering that $Cost^\tau_{k,t-1}(\mu)+\gamma(y_t,\mu)\geq Cost^*_{k,t-1}(\mu)+\gamma(y_t,\mu)$, we have that
for $\mu$ to be in $Set_{k,t}^{\tau}$ we need that $Cost^\tau_{k,t-1}(\mu)=Cost^*_{k,t-1}(\mu)$, and that $Cost^\tau_{k,t-1}(\mu)+\gamma(y_t,\mu)\leq C_{k-1,t}$. 
The former condition corresponds to $\mu$ being  in $Set_{k,t-1}^{\tau}$ and the second that $\mu$ is in $I^\tau_{k,t}$. So for $\tau\leq t-1$
\[ 
Set^{\tau}_{k,t}=Set^{\tau}_{k,t-1} \cap I^\tau_{k,t}.
\]
If this $Set^{\tau}_{k,t}=\emptyset$ then the value $\tau$ can be pruned, as $Set^{\tau}_{k,T}=\emptyset$ for all $T>t$.

If we denote the range of values $\mu$ can take to be $D$, then we further have that
\[
Set^{t}_{k,t}=D \backslash \left[\displaystyle\bigcup_{\tau}I^\tau_{k,t}\right],
\]
where $t$ can be pruned straight away if $Set^t_{k,t}=\emptyset$.



\begin{figure}[t]
  \centering
  \subfloat[End Time $t$]{
  \includegraphics[page=1,width=5.5cm,trim=1cm .5cm .5cm 1.5cm]{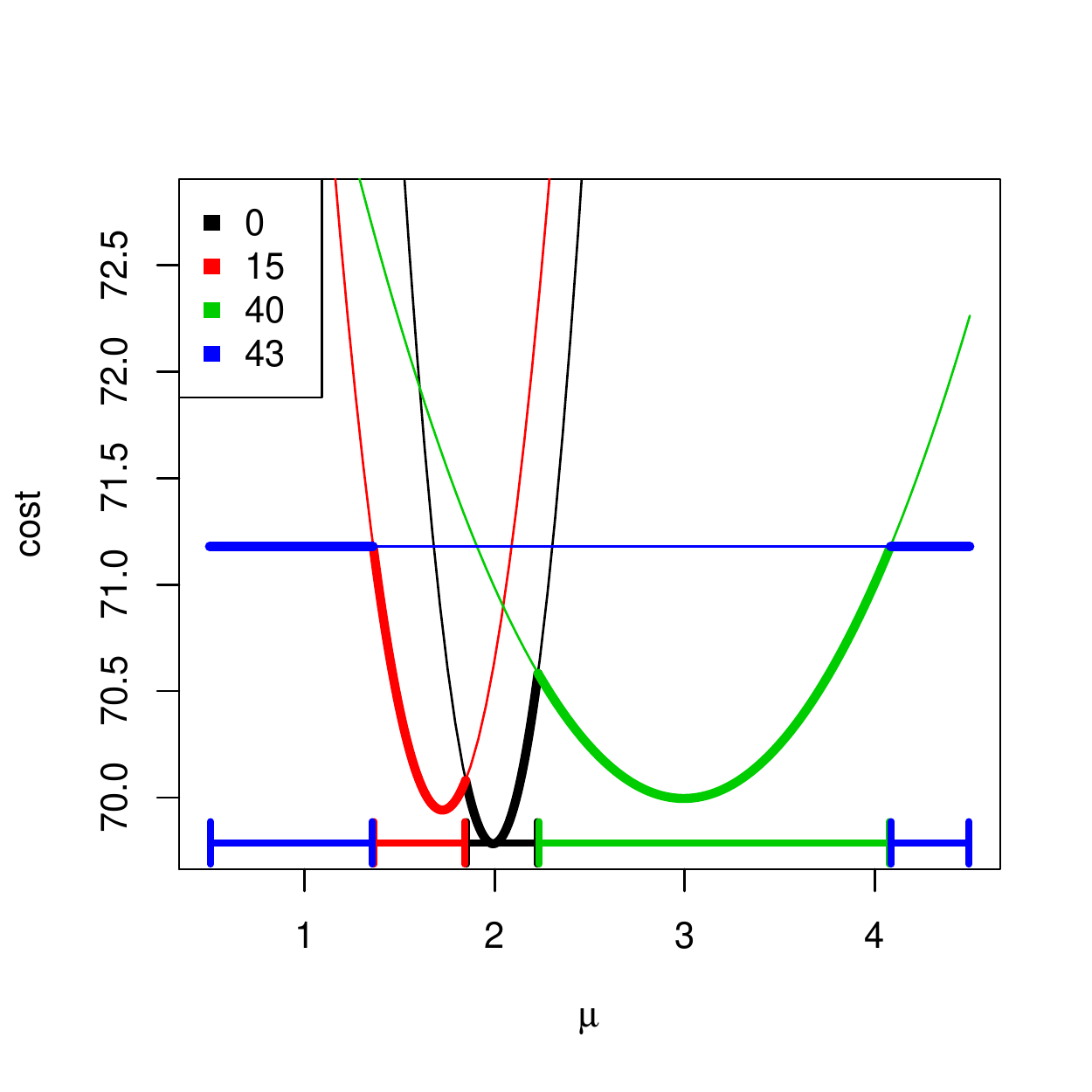}
} 
 \subfloat[Middle Time $t+1$]{
  \includegraphics[page=2,width=5.5cm,trim=.75cm .5cm .75cm 1.5cm]{pDPA/PDPAcandidates.pdf}
}  
 \subfloat[End Time $t+1$]{
  \includegraphics[page=3,width=5.5cm,trim=.5cm .5cm 1cm 1.5cm]{pDPA/PDPAcandidates.pdf}
}  
  \caption{Example of pDPA algorithm over two time-steps. On each plot we show individual $Cost^\tau_{k,t}(\mu)$ functions that are stored, 
  together with the intervals (along the bottom) for which each candidate last changepoint is optimal. In bold is the value of $Cost^*_{k,t}(\mu)$. 
  For this example $t=43$ and we are detecting a change in mean (see Section \ref{sec:defining-change-mean}).  (a) 4 candidates are optimal for some interval of $\mu$, however at $t=44$ (b), 
  when the candidate functions are updated and the new candidate is added, then the candidate $\tau=43$ is no longer optimal for any $\mu$ and hence can be pruned (c).
  \label{fig:pDPAfuncs} }
\end{figure}

An example of the pDPA recursion is given in Figure~\ref{fig:pDPAfuncs} for a change in mean using a least square cost criteria. 
The left-hand plot shows $Cost^*_{k,t}(\mu)$. 
In this example there are 5 intervals of $\mu$ corresponding to 4 different values of $\tau$ for which $Cost^*_{k,t}(\mu)=Cost^\tau_{k,t}(\mu)$.  
When we analyse the next data point, we update each of these four $Cost^\tau_{k,t}(\mu)$ functions, using $Cost^\tau_{k,t+1}(\mu)=Cost^\tau_{k,t}(\mu)+\gamma(y_{t+1},\mu)$, and introduce a new curve corresponding to a change-point at time $t+1$,
$Cost^{t+1}_{k,t+1}(\mu)=C_{k-1,t+1}$ (see middle plot). We can then prune the functions which are no longer optimal for any $\mu$ values, and in this case we remove one such function (see right-hand plot).

pDPA can be shown to be bounded in time by $\mathcal{O}(Kn^2)$. 
\citet{Rigaill2010} further analyses the time complexity of pDPA and shows it empirically to be $\mathcal{O}(Kn\log n)$, further indications towards this will be presented in Section~\ref{sec:emp-eval-fpop}. 
However pDPA has a computational overhead relative to Segment Neighbourhood Search, as it requires calculating and storing the $Cost^\tau_{k,t}(\mu)$ functions and the corresponding sets $Set_{k,t}^\tau$. 
Currently implementations of pDPA have only been possible for models with scalar segment parameters $\mu$, due to the difficulty of calculating the sets in higher dimensions. 
Being able to efficiently store and update the $Cost^\tau_{k,t}(\mu)$ have also restricted application primarily to models where $\gamma(y,\mu)$ corresponds to the log-likelihood of an exponential family. 
However this still includes a wide-range of changepoint applications, including that of detecting CNVs that we consider in Section \ref{sec:emp-eval-fpop}.
The cost of updating the sets depends heavily on whether the updates \eqref{eq:pDPAprune} can be calculated analytically, or whether they require the use of numerical methods.

\section{New Changepoint Algorithms} 
\label{sec:incr-effic}
Two natural ways of extending the two methods introduced above will be examined in this section. 
These are, respectively, to apply functional pruning (Section~\ref{sec:segm-neighb-search}) to Optimal Partitioning, and to apply inequality based pruning (Section~\ref{sec:optim-part-pelt}) to Segment Neighbourhood Search. 
These lead to two new algorithms, which we call Functional Pruning Optimal Partitioning (FPOP) and Segment Neighbourhood with Inequality Pruning (SNIP). 

\subsection{Functional Pruning Optimal Partitioning}
\label{sec:opr}
Functional Pruning Optimal Partitioning (FPOP) provides a version of Optimal Partitioning \citep{Jackson2005} which utilises functional pruning to increase the efficiency. As will be discussed in Section~\ref{sec:simil-diff-betw} and shown in Section~\ref{sec:emp-eval-fpop} FPOP provides an alternative to PELT which is more efficient in certain scenarios. The approach used by FPOP is similar to the approach for pDPA in Section~\ref{sec:segm-neighb-search}, however the theory is slightly simpler here as there is no longer the need to condition on the number of changepoints.

We assume condition C1 holds, that the cost function, $\mathcal{C}(\mathbf{y}_{\tau+1:t})$, can be split into component parts $\gamma(y_i,\mu)$ which depend on the parameter $\mu$. Cost functions $Cost_t^\tau$ can then be defined as the minimal cost of the data up to time $t$, conditional on the last changepoint being at $\tau$ and the last segment having parameter $\mu$. Thus for $\tau\leq t-1$
\begin{align}
  \label{eq:3}
  Cost_t^\tau(\mu)=F(\tau)+\beta+\sum_{i=\tau+1}^t\gamma(y_i,\mu),
\end{align}
and $Cost_t^t(\mu)=F(t)+\beta$.

These functions which only need to be stored for each candidate changepoint can then be recursively updated at each time step, $\tau\leq t-1$
\begin{align}
  \label{eq:8}
  Cost^\tau_{t}(\mu)=Cost^\tau_{t-1}(\mu)+\gamma(y_{t},\mu).
\end{align}
Given the cost functions $Cost^\tau_t(\mu)$ the minimal cost $F(t)$ can be returned by minimising over both $\tau$ and $\mu$:
\begin{align*}
  \label{eq:9}
  \min_\tau\min_\mu Cost_t^\tau(\mu) &= \min_\tau\min_\mu\left[F(\tau)+\beta+\sum_{i=\tau+1}^t\gamma(y_i,\mu)\right],\\
&=\min_\tau\left[F(\tau)+\beta+\min_\mu\sum_{i=\tau+1}^t\gamma(y_i,\mu)\right],\\
&=\min_\tau\left[F(\tau)+\beta+\mathcal{C}(\mathbf{y}_{\tau+1:t})\right],\\
&=F(t).
\end{align*}

As before, by interchanging the order of minimisation, the values of the potential last changepoint, $\tau$, can be pruned whilst allowing for a varying $\mu$. Firstly we will define the function $Cost_{t}^*(\mu)$, the minimal
cost of segmenting data $y_{1:t}$ conditional on the last segment having parameter $\mu$:
\[
 Cost^*_t(\mu)=\min_{\tau} Cost^\tau_t(\mu). 
\]
We will update these functions recursively over time, and use $F(t)=\min_\mu Cost^*_t(\mu)$ to then obtain the solution of the penalised minimisation problem. The recursions for $Cost^*_t(\mu)$ are obtained by splitting the minimisation over $\tau$ into $\tau\leq t-1$ and $\tau=t$
\begin{align*}
 Cost^*_t(\mu)&=\min\left\{ \min_{\tau \leq t-1} Cost^\tau_t(\mu)\mbox{ },\mbox{ }Cost^t_t(\mu)\right\},\\
&=\min\left\{ \min_{\tau \leq t-1} Cost^\tau_{t-1}(\mu)+\gamma(y_t,\mu)\mbox{ },\mbox{ }Cost^t_t(\mu)\right\},
\end{align*}
which then gives:
\[
 Cost^*_t(\mu)=\min\{ Cost^*_{t-1}(\mu)+\gamma(y_t,\mu)\mbox{ },\mbox{ }F(t)+\beta\}.
\]

To implement this recursion we need to be able to efficiently store and update $Cost^*_t(\mu)$. As before we do this by partitioning the space of possible $\mu$ values, $D$, into sets where each set corresponds to
a value $\tau$ for which $Cost^*_t(\mu)=Cost^\tau_t(\mu)$. We then need to be able to update these sets, and store $Cost^\tau_t(\mu)$ just for each $\tau$ for which the corresponding set is non-empty.

This can be achieved by first defining
\begin{align}
  \label{eq:5}
  I^\tau_{t}=\{\mu: Cost^\tau_{t}(\mu)\leq F(t) +\beta\}.
\end{align}
Then, for $\tau\leq t-1$, we define
\begin{align*}
  Set_t^\tau &= \{\mu: Cost_t^\tau(\mu)=Cost_t^*(\mu)\}\\
&=\{\mu: Cost_{t-1}^\tau(\mu)+\gamma(y_t,\mu)=\min{\{Cost_{t-1}^*(\mu)+\gamma(y_t,\mu)\mbox{ },\mbox{ }F(t)+\beta\}}\}
\end{align*}

Remembering that $Cost^\tau_{t-1}(\mu)+\gamma(y_t,\mu)\geq Cost^*_{t-1}(\mu)+\gamma(y_t,\mu)$, we have that
for $\mu$ to be in $Set_{t}^{\tau}$ we need that $Cost^\tau_{t-1}(\mu)=Cost^*_{t-1}(\mu)$, and that $Cost^\tau_{t-1}(\mu)+\gamma(y_t,\mu)\leq F(t)+\beta$. 
The former condition corresponds to $\mu$ being  in $Set_{t-1}^{\tau}$ and the second that $\mu$ is in $I^\tau_{t}$, so for $\tau\leq t-1$
\[
Set^{\tau}_{t}=Set^{\tau}_{t-1} \cap I^\tau_{t}.
\]
If this $Set^{\tau}_{t}=\emptyset$ then the value $\tau$ can be pruned, as then $Set^{\tau}_{T}=\emptyset$ for all $T>t$.

If we denote the range of values $\mu$ can take to be $D$, then we further have that
\[
Set^{t}_{t}=D \backslash \left[\displaystyle\bigcup_{\tau}I^\tau_{t}\right],
\]
where $t$ can be pruned straight away if $Set_t^t=\emptyset$.

This updating of the candidate functions and sets is illustrated in Figure~\ref{fig:OPRfuncs} where the $Cost$ functions and $Set$ intervals are displayed  across two time steps. In this example a change in mean has been considered, using a least squares cost. The bold line on the left-hand graph corresponds to the function $Cost^*_t(\mu)$ and is made up of 7 pieces which relate to 6 candidate last changepoints. As the next time point is analysed the six $Cost_t^\tau(\mu)$ functions are updated using the formula $Cost_{t+1}^\tau(\mu)=Cost_t^\tau(\mu)+\gamma(y_{t+1},\mu)$ and a new function, $Cost_{t+1}^{t+1}(\mu)=F(t+1)+\beta$, is introduced corresponding to placing a changepoint at time $t+1$ (see middle plot). The functions which are no longer optimal for any values of $\mu$ (i.e. do not form any part of $Cost^*_{t+1}(\mu)$) can then be pruned, and one such function is removed in the right-hand plot. 
\begin{figure}[t]
  \centering
  \subfloat[End Time $t$]{
  \includegraphics[page=1,width=5.5cm,trim=1cm .5cm .5cm 1.5cm]{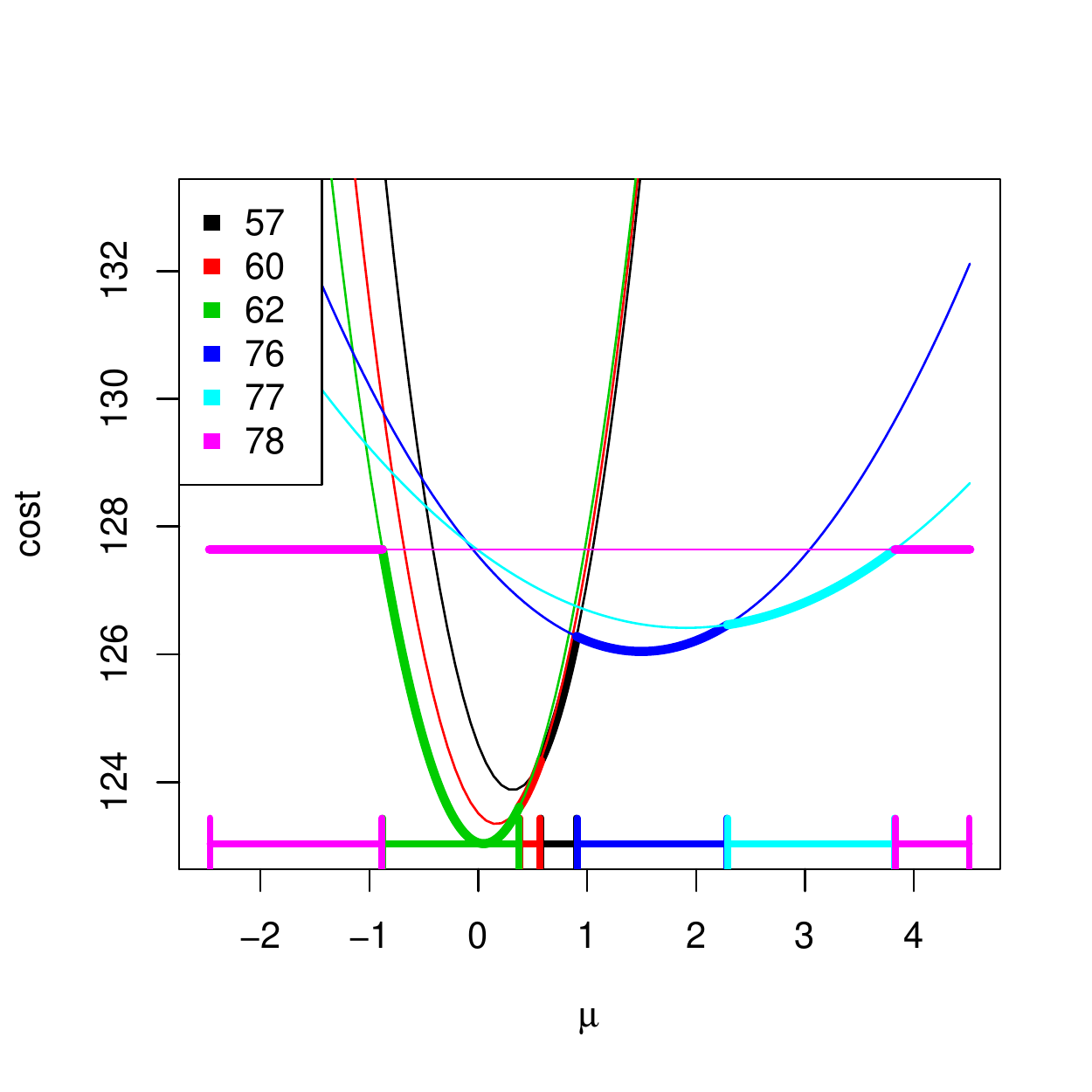}
} 
 \subfloat[Middle Time $t+1$]{
\includegraphics[page=2,width=5.5cm,trim=.75cm .5cm .75cm 1.5cm]{FPOP/FPOPcandidates.pdf}
}  
 \subfloat[End Time $t+1$]{
\includegraphics[page=3,width=5.5cm,trim=.5cm .5cm 1cm 1.5cm]{FPOP/FPOPcandidates.pdf}
}    
  \caption{Candidate functions over two time steps, the intervals shown along the bottom correspond to the intervals of $\mu$ for which each candidate last changepoint is optimal. 
  When $t=78$ (a) 6 candidates are optimal for some interval of $\mu$, however at $t=79$ (b), when the candidate functions are updated and the new candidate is added, 
  then candidate $\tau=78$ is no longer optimal for any $\mu$ and hence can be pruned (c).}
  \label{fig:OPRfuncs}
\end{figure} 

Once again we denote the set of potential last changes to consider as $R_{t}$ and then restrict the update rules (\ref{eq:4}) and (\ref{eq:6}) to $\tau\in R_{t}$. 
This set can then be recursively updated at each time step
\begin{align}
  R_{t+1} = \{ \tau\in\{R_{t} \cup \{t\}\} : Set_{t}^\tau \neq \emptyset \}.
\end{align}
These steps can then be applied directly to an Optimal Partitioning algorithm to form the FPOP method and the full pseudocode for this is presented in Algorithm~\ref{algo_OPR}.

\IncMargin{1em}
\begin{algorithm}[ht]
\SetKwData{Left}{left}\SetKwData{This}{this}\SetKwData{Up}{up}
\SetKwFunction{Union}{Union}\SetKwFunction{FindCompress}{FindCompress}
\SetKwInOut{Input}{Input}\SetKwInOut{Output}{Output}
\Input{Set of data of the form $\mathbf{y}_{1:n}=(y_1,\hdots,y_n)$,\\
A measure of fit $\gamma(\cdot,\cdot)$ dependent on the data and the mean,\\
A penalty $\beta$ which does not depend on the number or location of the changepoints.}
\BlankLine
Let $n=$length of data, and set $F(0)=-\beta$, $cp(0)=0$\;
then let $R_1=\{0\}$\;
and set $D=$ the range of $\mu$\;
$Set^0_{0}=D$\;
$Cost^0_0(\mu)=F(0)+\beta=0$\;
\For{$t=1,\hdots,n$}{
\For{$\tau \in R_t$}{
$Cost^\tau_{t}(\mu)=Cost^\tau_{t-1}(\mu)+\gamma(y_{t},\mu)$\;
}
Calculate $F(t)=\min_{\tau\in R_t}(\min_{\mu\in Set^{\tau}_t}[Cost^\tau_{t}(\mu)])$\;
Let $\tau_t=\operatorname*{arg\,min}_{\tau\in R_t}(\min_{\mu\in Set^{\tau}_t}[Cost^\tau_{t}(\mu)])$\;
Set $cp(t)=(cp(\tau_t),\tau_t)$\;
$Cost^t_{t}(\mu)=F(t)+\beta$\;
$Set^{t}_{t}=D$\;
\For{$\tau \in R_t$}{
$I^{\tau}_{t}=\{\mu: Cost^\tau_{t}(\mu)\leq F(t)+\beta\}$\;
$Set^\tau_{t}=Set^{\tau}_{t-1}\cap I^{\tau}_{t}$\;
$Set^t_{t}=Set^{t}_{t}\backslash I^{\tau}_{t}$\;
}
$R_{t+1} = \{ \tau\in\{R_{t} \cup \{t\}\} : Set_{t}^\tau \neq \emptyset \}$\;
}
\Output{The changepoints recorded in $cp(n)$.}
\BlankLine
\caption{Functional Pruning Optimal Partitioning (FPOP)}\label{algo_OPR}
\end{algorithm}\DecMargin{1em}

\subsection{Segment Neighbourhood with Inequality Pruning}
\label{sec:snp}

In a similar vein to Section~\ref{sec:opr}, Segment Neighbourhood Search can also benefit from using pruning methods. In Section~\ref{sec:segm-neighb-search} the method pDPA was discussed as a fast 
pruned version of Segment Neighbourhood Search. In this section a new method, Segment Neighbourhood with Inequality Pruning (SNIP), will be introduced. This takes the Segment Neighbourhood Search 
algorithm and uses inequality based pruning to increase the speed.

Under condition (C2) the following result can be proved for Segment Neighbourhood Search and this will enable points to be pruned from the candidate changepoint set. 
\begin{theorem}
  \label{SNPtheorem}
  Assume that there exists a constant, $\kappa$, such that condition C2 holds. If, for any $k\geq 1$ and $t<s$
\begin{align}
  C_{k-1,t}+\mathcal{C}(\mathbf{y}_{t+1:s})+\kappa> C_{k-1,s}\label{SNPtheorem2}
\end{align}
then at any future time $T>s$, $t$ cannot be the position of the last changepoint in the optimal segmenation of $y_{1:T}$ with $k$ changepoints.
\end{theorem}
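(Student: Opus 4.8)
The plan is to reproduce, at a fixed number of changepoints, the argument behind PELT's inequality based pruning (Section~\ref{sec:optim-part-pelt}), with $C_{k-1,\cdot}$ now playing the role that $F(\cdot)$ played there. The entry point is the Segment Neighbourhood recursion~\eqref{eq:SNS}: the time $t$ is the position of the last changepoint in the optimal segmentation of $\mathbf{y}_{1:T}$ with $k$ changepoints precisely when $t$ attains the minimum in $C_{k,T}=\min_{\tau\in\{k,\hdots,T-1\}}[C_{k-1,\tau}+\mathcal{C}(\mathbf{y}_{\tau+1:T})]$, that is, when $C_{k,T}=C_{k-1,t}+\mathcal{C}(\mathbf{y}_{t+1:T})$. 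So it is enough to exhibit a competing candidate whose value is strictly smaller, and the natural choice is $\tau=s$.

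First I would dispose of the range conditions. We may assume $t\geq k$, since otherwise $t$ is not even an admissible last-changepoint position for a $k$-changepoint segmentation and there is nothing to prove. Then for every $T>s$ we have $s>t\geq k$ and $s<T$, so $s\in\{k,\hdots,T-1\}$ and hence~\eqref{eq:SNS} gives the upper bound $C_{k,T}\leq C_{k-1,s}+\mathcal{C}(\mathbf{y}_{s+1:T})$.

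Next I would chain the inequalities. Fix $T>s$ and add $\mathcal{C}(\mathbf{y}_{s+1:T})$ to both sides of the hypothesis~\eqref{SNPtheorem2}:
\[
C_{k-1,t}+\mathcal{C}(\mathbf{y}_{t+1:s})+\mathcal{C}(\mathbf{y}_{s+1:T})+\kappa>C_{k-1,s}+\mathcal{C}(\mathbf{y}_{s+1:T}).
\]
Since $t<s<T$, condition C2 applies and gives $\mathcal{C}(\mathbf{y}_{t+1:s})+\mathcal{C}(\mathbf{y}_{s+1:T})+\kappa\leq\mathcal{C}(\mathbf{y}_{t+1:T})$, so the left-hand side above is at most $C_{k-1,t}+\mathcal{C}(\mathbf{y}_{t+1:T})$. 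Combining this with the displayed inequality and with the upper bound on $C_{k,T}$ from the previous paragraph yields
\[
C_{k-1,t}+\mathcal{C}(\mathbf{y}_{t+1:T})>C_{k-1,s}+\mathcal{C}(\mathbf{y}_{s+1:T})\geq C_{k,T}.
\]
Therefore $t$ cannot attain the minimum defining $C_{k,T}$, so it is not the position of the last changepoint in the optimal segmentation of $\mathbf{y}_{1:T}$ with $k$ changepoints; as $T>s$ was arbitrary, this holds at every future time.

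There is no genuine obstacle here — the core is a two-line inequality chain. The only points that need attention are (a) checking that $s$ lies in the admissible candidate range $\{k,\hdots,T-1\}$, so that~\eqref{eq:SNS} really does bound $C_{k,T}$ above by the $\tau=s$ term, and (b) using condition C2 in the right direction: it bounds the cost of the two-piece split above by the cost of the single merged segment, which is exactly what turns the hypothesis into a bound on $\mathcal{C}(\mathbf{y}_{t+1:T})$. It is also worth noting that this is simply the fixed-$k$ analogue of PELT's pruning criterion, and that, since C1 implies C2 with $\kappa=0$, the theorem in particular covers the negative-log-likelihood segment costs of~\eqref{eq:nll}.
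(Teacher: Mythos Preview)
Your proof is correct and follows essentially the same approach as the paper's own proof: add $\mathcal{C}(\mathbf{y}_{s+1:T})$ to both sides of~\eqref{SNPtheorem2}, invoke C2 to collapse the two segment costs on the left into $\mathcal{C}(\mathbf{y}_{t+1:T})$, and compare against $C_{k,T}$ via the $\tau=s$ term in~\eqref{eq:SNS}. Your version is slightly more careful in explicitly verifying that $s\in\{k,\hdots,T-1\}$ so that~\eqref{eq:SNS} applies, a point the paper leaves implicit.
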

\begin{proof} \label{sec:assume-that-there} 
The idea of the proof is to show that a segmentation of $y_{1:T}$ into $k$ segments with the last changepoint at $s$ will be better than one 
with the last changepoint at $t$ for all $T>s$.
 
 Assume that (\ref{SNPtheorem2}) is true. Now for any $s<T\leq n$
  \begin{align*}
C_{k-1,t}+\mathcal{C}(\mathbf{y}_{t+1:s})+\kappa+&>C_{k-1,s},\\
    C_{k-1,t}+\mathcal{C}(\mathbf{y}_{t+1:s})+\kappa+\mathcal{C}(\mathbf{y}_{s+1:T})&>C_{k-1,s}+\mathcal{C}(\mathbf{y}_{s+1:T}),\\
C_{k-1,t}+\mathcal{C}(\mathbf{y}_{t+1,T})&>C_{k-1,s}+\mathcal{C}(\mathbf{y}_{s+1,T}),\hspace{50pt}\mbox{     (by C2).}
  \end{align*}
Therefore for any $T>s$ the cost $C_{k-1,t}+\mathcal{C}(\mathbf{y}_{t+1,T})>C_{k,T}$ and hence $t$ cannot be the optimal location of the last changepoint when segmenting $\mathbf{y}_{1:T}$ with $k$ changepoints. 
\end{proof}

Theorem~\ref{SNPtheorem} implies that the update rule (\ref{eq:SNS}) can be restricted to a reduced set over $\tau$ of potential last changes to consider. This set, which we shall denote as $R_{k,t}$, can be updated simply by
\begin{align}
R_{k,t+1}=\{v\in \{R_{k,t}\cup\{t\}\}:C_{k-1,v}+\mathcal{C}(\mathbf{y}_{v+1,t})+\kappa< C_{k-1,t}\}.
\end{align}

This new algorithm, SNIP, is described fully in Algorithm~\ref{algo_segneighpelt}.

\IncMargin{1em}
\begin{algorithm}[ht]
\SetKwData{Left}{left}\SetKwData{This}{this}\SetKwData{Up}{up}
\SetKwFunction{Union}{Union}\SetKwFunction{FindCompress}{FindCompress}
\SetKwInOut{Input}{Input}\SetKwInOut{Output}{Output}
\Input{Set of data of the form $\mathbf{y}_{1:n}=(y_1,\hdots,y_n)$,\\
A measure of fit $\mathcal{C}(\cdot)$ dependent on the data (needs to be minimised),\\
An integer, $K$, specifying the maximum number of changepoints to find,\\
A constant $\kappa$ that satisfies: $\mathcal{C}(\mathbf{y}_{t+1:s})+\mathcal{C}(\mathbf{y}_{s+1:T})+\kappa\leq \mathcal{C}(\mathbf{y}_{t+1:T})$.}
\BlankLine
Let $n=$length of data\;
Set $C_{0,t}=\mathcal{C}(\mathbf{y}_{1:t})$, for all $t\in \{1,\hdots,n\}$\;
\For{$k=1,\hdots,K$}{
Set $R_{k,k+1}=\{k\}$.
\For{$t=k+1,\hdots,n$}{
Calculate $C_{k,t}=\min_{v\in R_{k,t}}(C_{k-1,v}+\mathcal{C}(\mathbf{y}_{v+1:t}))$\;
Set $R_{k,t+1}=\{v\in \{R_{k,t}\cup\{t\}\}:C_{k-1,v}+\mathcal{C}(\mathbf{y}_{v+1,t})+\kappa< C_{k-1,t}\}$\;
}
Set $\tau_{k,1}=\operatorname*{arg\,min}_{v\in R_{k,n}}(C_{k-1,v}+\mathcal{C}(\mathbf{y}_{v+1,n}))$\;
\For{$i=2,\hdots,k$}{
Let $\tau_{k,i}=\operatorname*{arg\,min}_{v\in R_{k-i,\tau_{k,i-1}}}(C_{k-1,v}+\mathcal{C}(\mathbf{y}_{v+1,\tau_{k,i-1}}))$\;
}
}
\Output{For $k=0,\hdots,K$: the total measure of fit, $C_{k,n}$, for $k$ changepoints and the location of the changepoints for that fit, $\boldsymbol{\tau}_{k,(1:k)}$.}
\BlankLine
\caption{Segment Neighbourhood with Inequality Pruning (SNIP)}\label{algo_segneighpelt}
\end{algorithm}\DecMargin{1em}

\section{Comparisons Between Pruning Methods}
\label{sec:simil-diff-betw}


Functional and inequality based pruning both offer increases in the efficiency in solving both the penalised and constrained problems, 
however their use depends on the assumptions which can be made on the cost function. 
Inequality based pruning is dependent on the assumption C2, while functional pruning requires the slightly stronger condition C1. 
Furthermore, inequality based pruning has a large computational overhead, and is currently only feasible for detecting a change in a univariate parameter.

If we consider models for which both pruning methods can be implemented, we can compare the extent to which the methods prune. This will give some insight into when
the different pruning methods would be expected to work well.
To explore this in Figures \ref{fig:noleftPELT-OPR} and \ref{fig:SNPPDPAcandidates} we look at the amount of candidates stored by functional and inequality based pruning in each of the two optimisation problems.

\begin{figure}[t]
  \centering
 \includegraphics[width=8cm, trim=1cm 1cm 1cm 1cm]{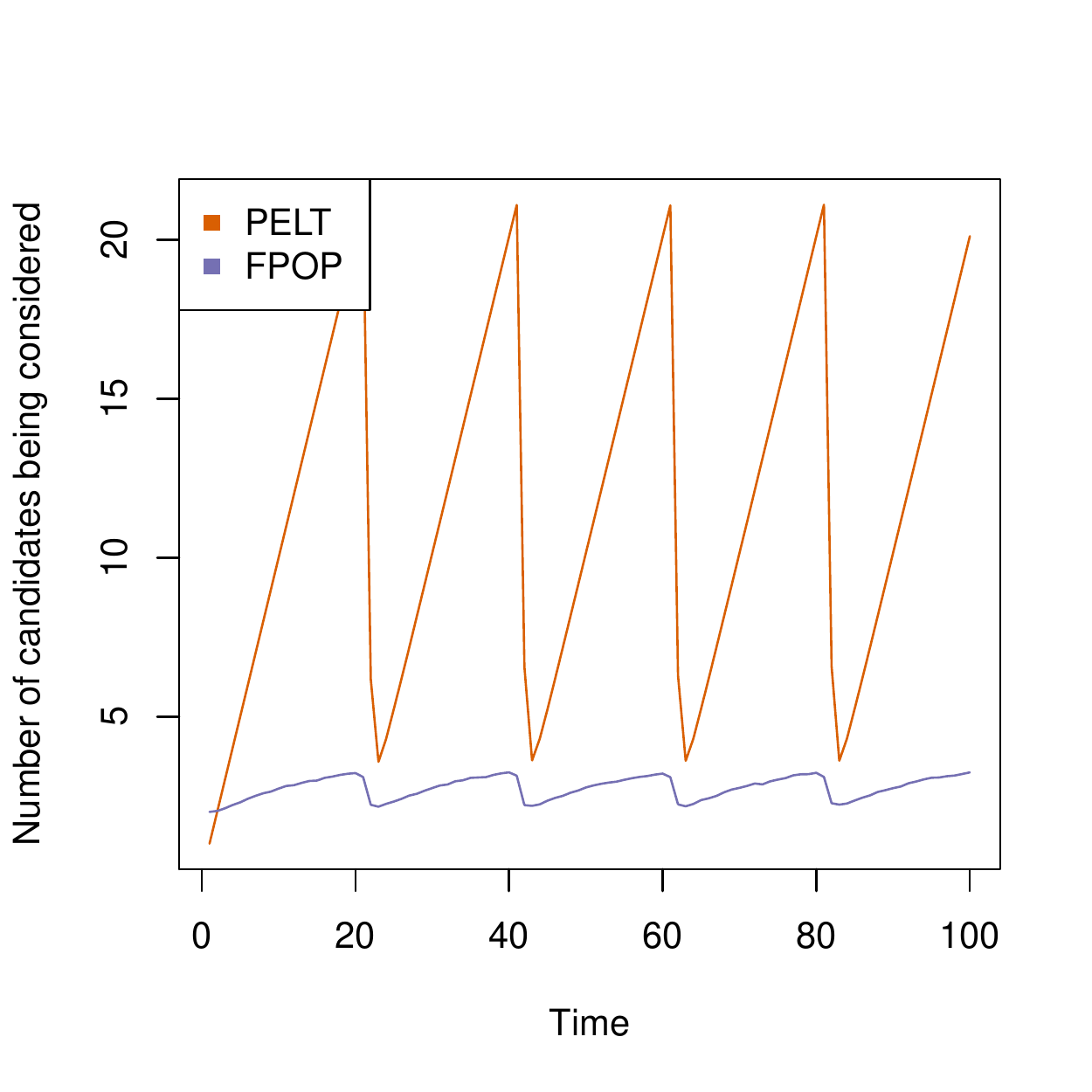}
\caption{Comparison of the number of candidate changepoints stored over time by FPOP (purple) and PELT (orange). Averaged over 1000 data sets with changepoints at $t=20,40,60$ and $80$.}
  \label{fig:noleftPELT-OPR}
\end{figure}

As Figure~\ref{fig:noleftPELT-OPR} illustrates, PELT prunes very rarely; only when evidence of a change is particularly high. 
In contrast, FPOP prunes more frequently keeping the candidate set small throughout.
Figure~\ref{fig:SNPPDPAcandidates} shows similar results for the constrained problem. While pDPA constantly prunes, 
SNIP only prunes sporadically. In addition SNIP fails to prune much at all for low values of $k$. 

\begin{figure}[t]
  \centering
   \subfloat{
     \includegraphics[page=1,width=4.5cm, trim=0cm 1cm 0cm 1cm]{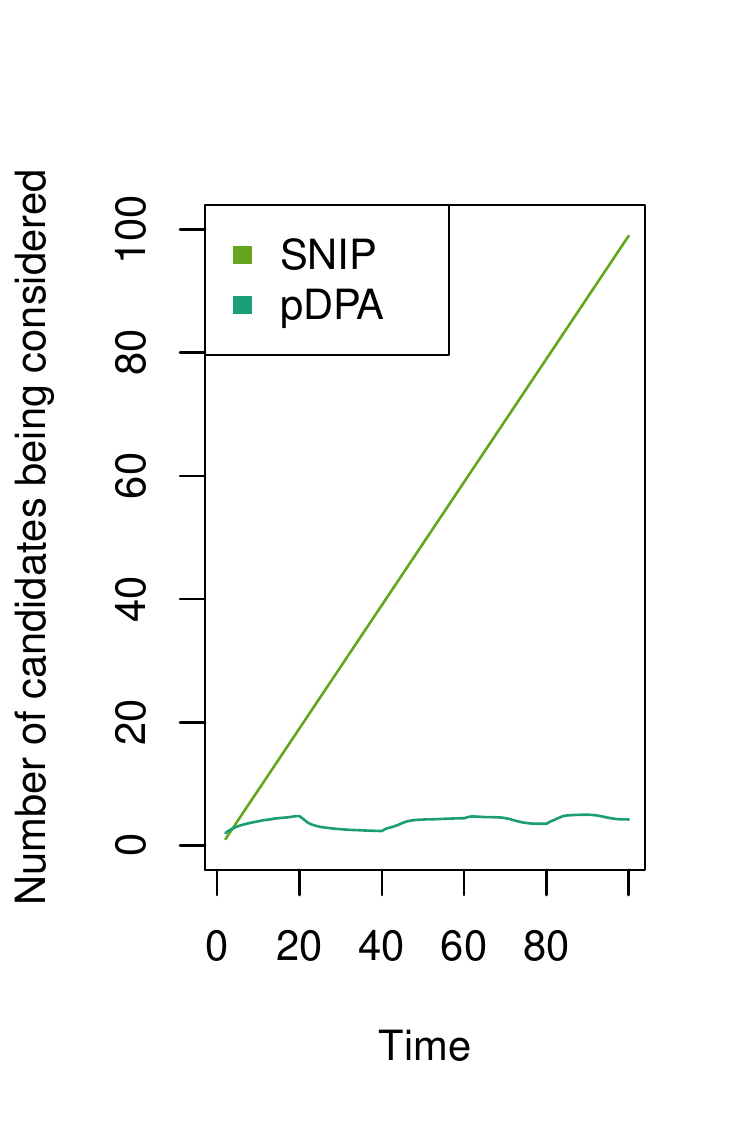} 
   }
 \subfloat{
     \includegraphics[page=2,width=4.5cm, trim=0cm 1cm 0cm 1cm]{NumberOfCPs/ConsideredSNIPandPDPA.pdf} 
   }
 \subfloat{
     \includegraphics[page=3,width=4.5cm, trim=0cm 1cm 0cm 1cm]{NumberOfCPs/ConsideredSNIPandPDPA.pdf} 
   }
 \subfloat{
     \includegraphics[page=4,width=4.5cm, trim=0cm 1cm 0cm 1cm]{NumberOfCPs/ConsideredSNIPandPDPA.pdf} 
   }
  \caption{Comparison of the number of candidate changepoints stored over time by pDPA (teal) and SNIP (green) at multiple values of $k$ in the algorithms (going from  left to right $k=2,3,4,5$). Averaged over 1000 data sets with changepoints at $t=20,40,60$ and $80$.}
  \label{fig:SNPPDPAcandidates}
\end{figure}

Figures \ref{fig:noleftPELT-OPR} and \ref{fig:SNPPDPAcandidates} give strong empirical evidence that functional pruning prunes more points than the inequality based method. 
In fact it can be shown that any point pruned by inequality based pruning will also be pruned at the same time step by functional pruning. 
This result holds for both the penalised and constrained case and is stated formally in Theorem \ref{thr:1}.

\begin{theorem}
  \label{thr:1}
  Let $\mathcal{C}(\cdot)$ be a cost function that satisfies condition C1, and consider solving either the constrained or penalised
  optimisation problem using dynamic programming and either inequality or functional pruning. 
  
  Any point pruned by inequality based pruning at time $t$ will also have been pruned by functional pruning at the same time.
\end{theorem}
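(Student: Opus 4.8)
The plan is to prove the slightly sharper statement that, whenever $\mathcal{C}(\cdot)$ satisfies C1, the set of candidate last changepoints retained by functional pruning is contained, at every step, in the set retained by inequality based pruning; the theorem follows at once, since a candidate absent from the functional candidate set by step $t$ has been discarded by functional pruning no later than it is discarded by inequality based pruning. A preliminary reduction: when C1 holds I run the inequality based method with $\kappa=0$, which C1 guarantees is admissible for C2; this is harmless because if a candidate satisfies the pruning inequality~\eqref{eq:hold} (resp.~\eqref{SNPtheorem2}) for any admissible $\kappa\le 0$ it also satisfies it for $\kappa=0$, so taking $\kappa=0$ only increases the amount pruned.

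The heart of the argument is a dictionary between the two pruning tests. In the penalised case, C1 and~\eqref{eq:3} give $Cost^\tau_t(\mu)=F(\tau)+\beta+\sum_{i=\tau+1}^t\gamma(y_i,\mu)$, so $\min_\mu Cost^\tau_t(\mu)=F(\tau)+\beta+\mathcal{C}(\mathbf{y}_{\tau+1:t})$. Hence the inequality based pruning condition for $\tau$ at time $t$, namely $F(\tau)+\mathcal{C}(\mathbf{y}_{\tau+1:t})>F(t)$, is equivalent to $\min_\mu Cost^\tau_t(\mu)>F(t)+\beta$, i.e. to the set $I^\tau_t$ of~\eqref{eq:5} being empty. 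Since $Cost^*_t(\mu)\le F(t)+\beta$ for all $\mu$, an empty $I^\tau_t$ says precisely that $\tau$'s cost curve lies everywhere above $Cost^*_t$, so $\tau$ is optimal for no $\mu$ and $Set^\tau_t=Set^\tau_{t-1}\cap I^\tau_t=\emptyset$, which is exactly the condition under which functional pruning discards $\tau$. The constrained case is identical: replacing $F(\tau)+\beta$ by $C_{k-1,\tau}$, $F(t)+\beta$ by $C_{k-1,t}$, $Cost^\tau_t$ by $Cost^\tau_{k,t}$ of~\eqref{eq:costfunk} and $I^\tau_t$ by $I^\tau_{k,t}$ of~\eqref{eq:pDPAprune}, the pruning test of Theorem~\ref{SNPtheorem} with $\kappa=0$ becomes $I^\tau_{k,t}=\emptyset$.

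I would then finish by induction on $t$ (in the constrained case, separately for each $k$), proving $R^{\mathrm{func}}_t\subseteq R^{\mathrm{ineq}}_t$. The base case holds because both algorithms start from the same singleton candidate set. For the inductive step, take $\tau\in R^{\mathrm{func}}_{t+1}$; then $\tau\in R^{\mathrm{func}}_t\cup\{t\}\subseteq R^{\mathrm{ineq}}_t\cup\{t\}$ by the inductive hypothesis. If $\tau<t$, then $Set^\tau_t\neq\emptyset$ forces $I^\tau_t\neq\emptyset$ (since $Set^\tau_t\subseteq I^\tau_t$), so by the dictionary $F(\tau)+\mathcal{C}(\mathbf{y}_{\tau+1:t})\le F(t)$ and $\tau$ survives inequality based pruning, giving $\tau\in R^{\mathrm{ineq}}_{t+1}$. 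The freshly created candidate $\tau=t$ needs a separate line: under $\kappa=0$ it is never pruned at step $t$ by either method, the empty segment having cost $0$, so it belongs to both $R^{\mathrm{func}}_{t+1}$ and $R^{\mathrm{ineq}}_{t+1}$.

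I do not anticipate a substantial obstacle: once the equivalence between $I^\tau_t=\emptyset$ and the inequality based pruning condition is established, the rest is bookkeeping. The points that do require care are (i) fixing $\kappa=0$ under C1, (ii) tracking which candidates are still present in each algorithm, which is what the induction handles, and (iii) the boundary candidate $\tau=t$. One cannot strengthen the conclusion to "at exactly the same step": functional pruning may in fact discard a candidate strictly earlier, namely when $I^\tau_t\neq\emptyset$ but $Set^\tau_{t-1}\cap I^\tau_t=\emptyset$, and it is precisely this asymmetry that makes the set inclusion $R^{\mathrm{func}}_t\subseteq R^{\mathrm{ineq}}_t$ the natural formal statement underlying the theorem.
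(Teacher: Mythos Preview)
Your proof is correct and follows essentially the same route as the paper: both reduce the inequality based pruning condition (with $\kappa=0$ under C1) to the statement $\min_\mu Cost^\tau_t(\mu)>F(t)+\beta$, i.e.\ $I^\tau_t=\emptyset$, and then observe that $Set^\tau_t\subseteq I^\tau_t$ forces $Set^\tau_t=\emptyset$. Your treatment is in fact more careful than the paper's---you make explicit the choice of $\kappa$, handle the fresh candidate $\tau=t$, and wrap the argument in an induction yielding the set inclusion $R^{\mathrm{func}}_t\subseteq R^{\mathrm{ineq}}_t$---but none of this changes the core idea, which is identical.
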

\begin{proof}
  \label{sec:given-cost-function-1}
We prove this for pruning of optimal partitioning, with the ideas extending directly to the pruning of the Segment Neighbourhood algorithm.
  
  For a cost function which can be decomposed into pointwise costs, it's clear that condition C2 holds when $\kappa=0$ and hence inequality based pruning can be used.
 Recall that the point $\tau$ (where $\tau<t$, the current time point) is pruned by inequality based pruning in the penalised case if
  \begin{align*}
    F(\tau)+\mathcal{C}(\mathbf{y}_{\tau+1:t})\geq F(t),
  \end{align*}
Then, by letting $\hat{\mu}_{\tau}$ be the value of $\mu$ such that $Cost^\tau_t(\mu)$ is minimised, this is equivalent to
\begin{align*}
  Cost^\tau_t(\hat{\mu}_{\tau})-\beta\geq F(t),
\end{align*}
Which can be generalised for all $\mu$ to
\begin{align*}
  Cost^\tau_t(\mu)\geq F(t)+\beta.
\end{align*}
Therefore equation~\eqref{eq:5} holds for no value of $\mu$ and hence $I^\tau_t=\emptyset$ and furthermore $Set^\tau_{t+1}=Set^\tau_t\cap I^\tau_t=\emptyset$ meaning that $\tau$ is pruned under functional pruning.
\end{proof}

\section{Empirical evaluation of FPOP}
\label{sec:emp-eval-fpop}

As explained in Section \ref{sec:simil-diff-betw} functional pruning
leads to a better pruning in the following sense: any point pruned by
inequality based pruning will also be pruned by functional pruning.
However, functional pruning is computationally more demanding than
inequality based pruning.  We thus decided to empirically compare the
performance of FPOP to PELT, pDPA and Binary Segmentation (Binseg).

To do so, we implement FPOP for the quadratic loss in C++.
More precisely, we consider the quadratic cost (\ref{eq:Cost}). 
We assess the runtimes of FPOP on both real microarray data as well as synthetic data.
All algorithms were implemented in C++. For pDPA and Binary Segmentation we had to input
a maximum number of changepoints to search for, which we denote $K$ as before. 


Note that the Binary Segmentation heuristic is computationally
extremely fast. That is because its complexity is on average $\mathcal{O}(n \log K)$. Furthermore, it relies on a few
fairly simple operations and hence the constant in front of this big
$\mathcal O$ notation is expected to be quite low.  For this reason we
believe that Binary Segmentation is good reference point in terms of
speed and we do not think it is possible to be much faster. However, unlike the other algorithms,
Binary Segmentation is not guaranteed to find the optimal segmentation. For a fuller investigation
of the loss of accuracy that can occur when using Binary Segmentation see \cite{Killick2012a}.

\subsection{Speed benchmark: 4467 chromosomes from 
  tumour microarrays}

\citet{Hocking2014} proposed to benchmark the speed of 
segmentation algorithms on a database of 4467 problems of size varying
from 25 to 153662 data points. These data come from different
microarrays data sets (Affymetrix, Nimblegen, BAC/PAC) and different
tumour types (leukaemia, lymphoma, neuroblastoma, medulloblastoma).

We compared FPOP to several other segmentation algorithms: pDPA \citep{Rigaill2010}, PELT \citep{Killick2012a}, and Binary Segmentation
(Binseg).  For such data, we expect a small number of changes and for
all profiles we ran pDPA and Binseg with a maximum number of changes
$K=52$.  

We used the R \verb|system.time| function to measure
the execution time of all 4 algorithms on all 4467 segmentation
problems. The R source code for these timings is in
\verb|benchmark/systemtime.arrays.R| in the opfp project repository on
R-Forge:
\url{https://r-forge.r-project.org/projects/opfp/}

It can be seen in Figure~\ref{fig:sys_runtimes_microarray} that in
general FPOP is faster than PELT and pDPA, but about two times slower
than Binseg.  Note that \verb|system.time| is inaccurate for small
times due to rounding, as pointed out in Figure~\ref{fig:sys_runtimes_microarray}
middle and right.  For these small profiles we also assessed the
performances of FPOP, PELT and Binseg using the \verb|microbenchmark|
package and confirmed that FPOP is faster than PELT. For these small
problems we were surprised to observe that FPOP exhibited about the
same speed as Binseg (microbenchmark results figure not shown).
 
\begin{figure}[t]
 \parbox{6.5cm}{ \includegraphics[height=5cm]{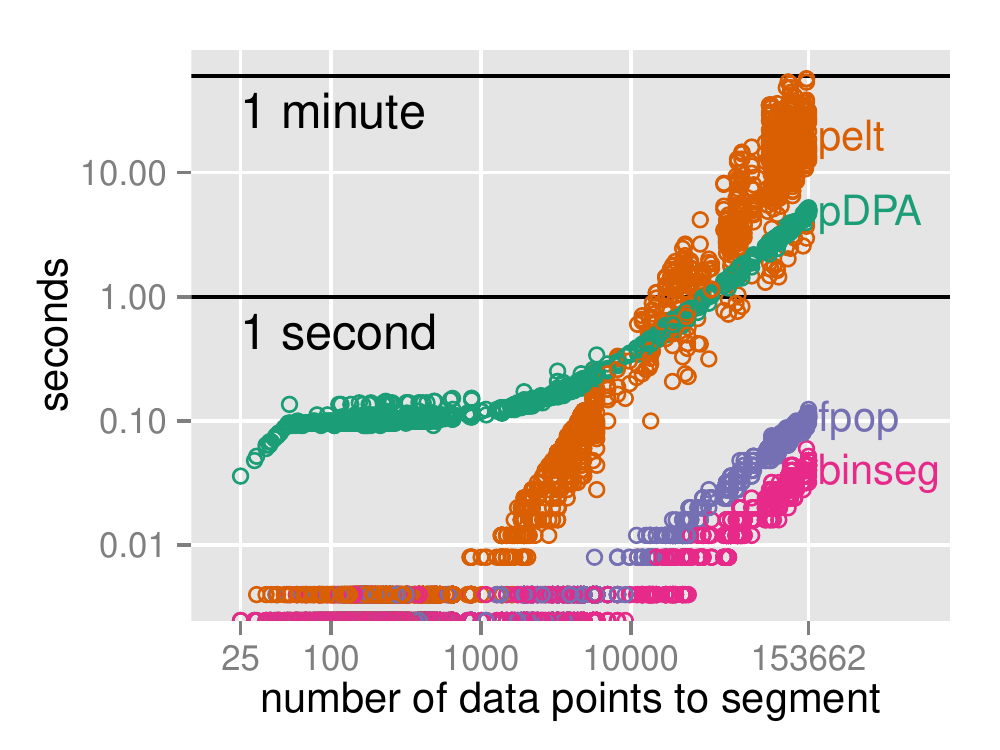}
  } \parbox{12.5cm}{  \includegraphics[width=\linewidth]{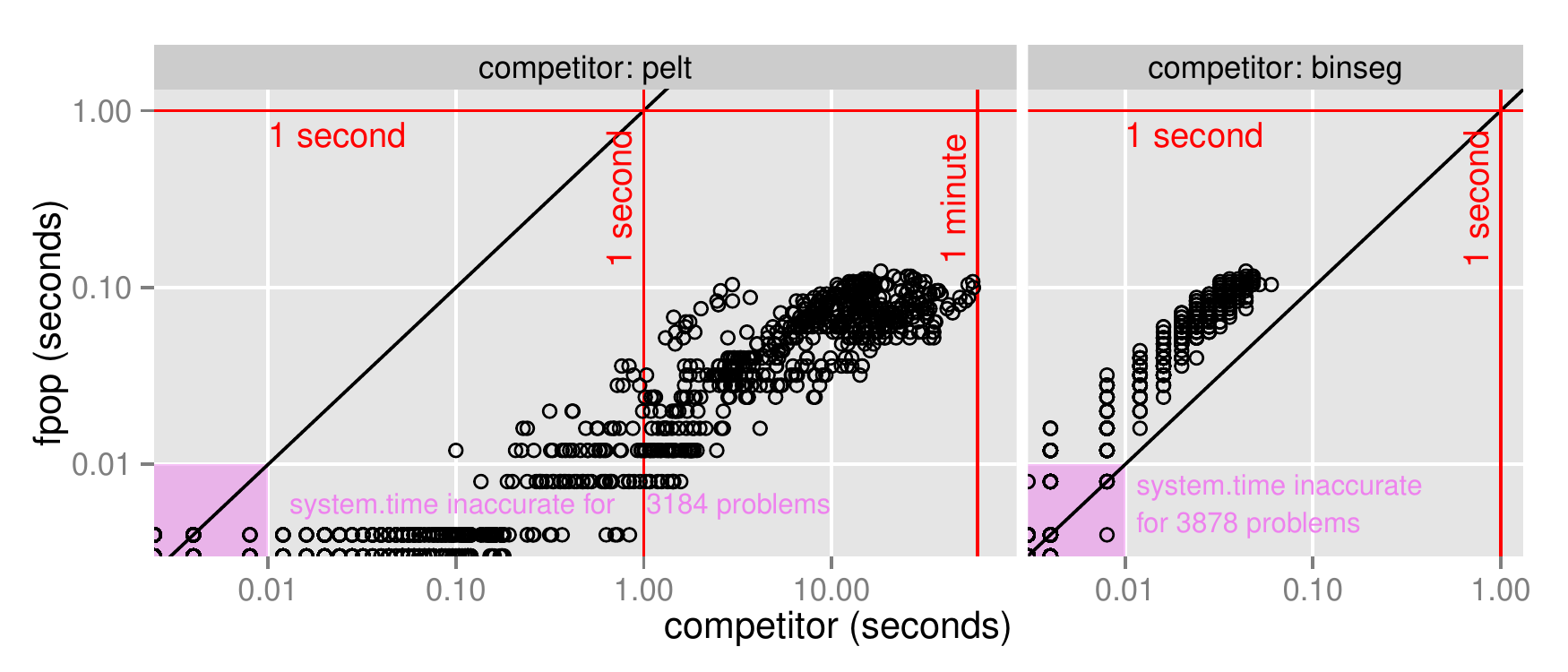}}

\caption{(Left) Runtimes of FPOP, PELT,
pDPA and Binseg as a function of the length of the profile  using  on the tumour micro array benchmark. 
(Middle) Runtimes of PELT and FPOP for the same profiles.
(Right) Runtimes of Binseg and FPOP for the same profiles.
}\label{fig:sys_runtimes_microarray}
\end{figure}

\subsection{Speed benchmark: simulated data with different 
  number of changes}

The speed of PELT, Binseg and pDPA depends on the underlying number of changes.
For pDPA and Binseg the relationship is clear as to cope with a larger number of changes
one need to increase the maximum number of changes to look for $K$. 
For a fixed size signal the runtime dependency is expected to be in $\mathcal{O}(\log K)$ for Binseg and
in $\mathcal{O}(K)$ for pDPA.

For PELT the relationship is less clear, however we expect pruning to be more efficient
if there is a large number of changepoints. Hence for a fixed size signal we expect 
the runtime of PELT to improve with the underlying number of changes.

Based on Section \ref{sec:simil-diff-betw} we expect FPOP to be more efficient than PELT and pDPA.
Thus it seems reasonable to expect FPOP to be efficient for the whole range of $K$. 
This is what we empirically check in this section.

To do that we simulated Gaussian signal with 200000 data points and varied the number of changes
and timed the algorithms for each of these signals. We then repeat the same experience for signals with $10^7$ and timed FPOP and Binseg only.
The R source code
for these timings is in \verb|benchmark/systemtime.simulation.R| in
the opfp project repository on R-Forge: \url{https://r-forge.r-project.org/projects/opfp/}.

It can be seen in Figure \ref{fig:simu_numberK} that FPOP is always faster than pDPA and PELT. 
Interestingly for both $n=2\times 10^5$ and $10^7$ it is faster than Binseg for a true number of changepoints larger than 500.

\begin{figure}[t]
  \begin{center}
      \includegraphics[width=8cm]{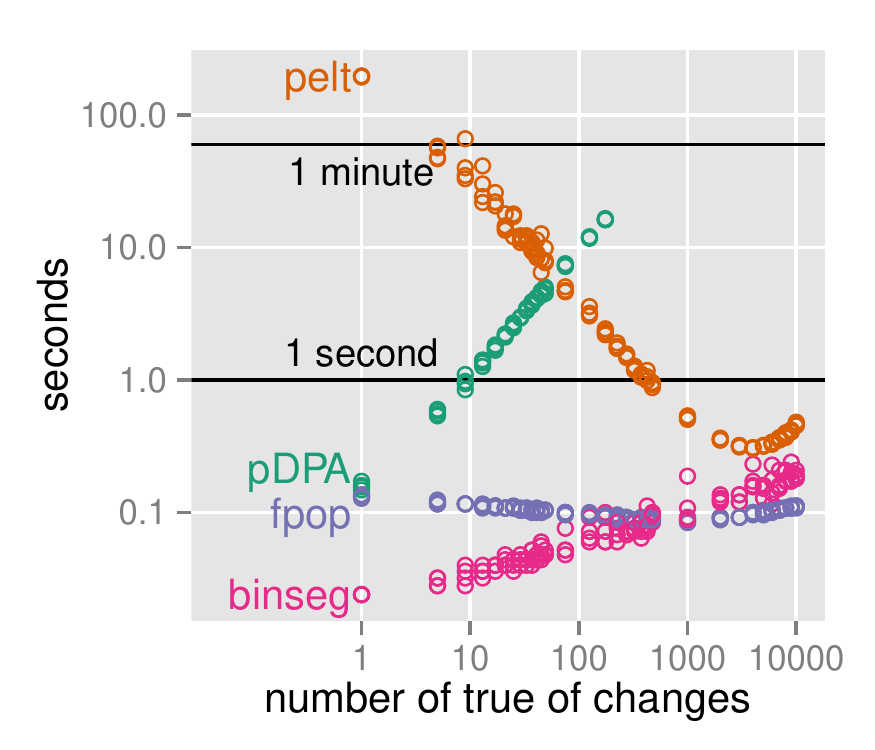}
    \includegraphics[width=8cm]{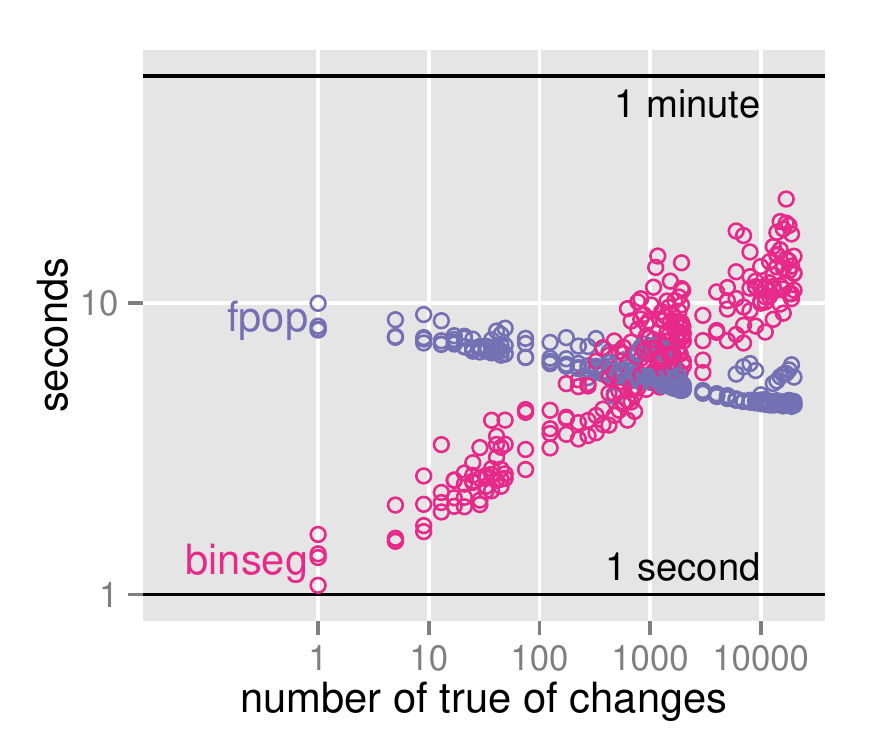}
  \end{center}
\caption{Runtimes as a function of the true number of changepoints.
(Left) For FPOP, PELT,
pDPA and Binseg and $n=2 \times 10^5$
(Right) For Binseg and FPOP and $n=10^7$
}\label{fig:simu_numberK}
\end{figure}

\subsection{Accuracy benchmark: the neuroblastoma data set}

\citet{Hocking2013} proposed to benchmark the changepoint
detection accuracy of segmentation models by using annotated regions
defined by experts when they visually inspected scatterplots of the
data. The neuroblastoma data are a set of 575 copy number microarrays
of neuroblastoma tumours, and each chromosome is a separate
segmentation problem. The benchmark is to use a training set of $n$
annotated chromosomes to learn a segmentation model, and then quantify
the annotation error on a test set. Let $d_1, \dots, d_n$ be the
number of data points to segment on each chromosome in the training
data set, and let $\mathbf y_1\in\RR^{d_1}, \dots, \mathbf
y_n\in\RR^{d_n}$ be the vectors of noisy data for each chromosome in
the training set.

Both PELT and pDPA have been applied to this benchmark by first
defining $\beta = \lambda d_i$ for all $i\in\{1, \dots, n\}$, and then
choosing the constant $\lambda$ that maximises agreement with the
annotated regions in the training set. 

Since FPOP computes the same
segmentation as PELT, we obtain the same error rate for
FPOP on the neuroblastoma benchmark.
As shown on
\url{http://cbio.ensmp.fr/~thocking/neuroblastoma/accuracy.html}, FPOP
(fpop) achieves 2.2\% test error. This is the smallest error across all
algorithms that have been tested.





\section{Discussion}

We have introduced two new algorithms for detecting changepoints, FPOP and SNIP. A natural question is which of these, and the existing algorithms, pDPA and PELT, should be used in which applications. 
There are two stages to answering this question. The first is whether to detect changepoints through solving the constrained or the penalised optimisation problem, and the second is whether to
use functional or inequality based pruning.

The advantage of solving the constrained optimisation problem is that this gives optimal segmentations for a range of numbers of changepoints. The disadvantage is that solving it is slower than
solving the penalised optimisation problem, particularly if there are many changepoints. In interactive situations where you wish to explore segmentations of the data, then solving the 
constrained problem is to be preferred \cite[]{Hocking2014}. However in non-interactive scenarios when the penalty parameter is known in advance, it will be faster to solve the penalised problem
 to recover the single segmentation of interest.

The decision as to which pruning method to use is purely one of computational efficiency. We have shown that functional pruning always prunes more than inequality based pruning, and empirically have seen 
that this difference can be large, particularly if there are few changepoints. However functional pruning can be applied less widely. Not only does it require a stronger condition on the cost functions, but currently its implementation
has been restricted to detecting changes in a uni-variate parameter from a model in the exponential family. Even for situations where functional pruning can be applied, its computational overhead per non-pruned candidate is higher.

Our experience suggests that you should prefer functional pruning in the situations where it can be applied. For example FPOP was always faster than PELT for detecting a change in mean in the empirical studies we conducted,
the difference in speed is particularly large in situations where there are few changepoints. Furthermore we observed FPOP's computational speed was robust to changes in the number of changepoints to be detected, and was
even competitive with, and sometimes faster than, Binary Segmentation. 


{\bf Acknowledgements} We thank Adam Letchford for helpful comments and discussions. This research was supported by EPSRC grant EP/K014463/1. Maidstone gratefully acknowledges funding from EPSRC via the
STOR-i Centre for Doctoral Training.

\bibliography{refs}

\end{document}